%% file: main.tex
\documentclass[11pt]{article}
\usepackage[margin=1.2in,bottom=1.2in]{geometry}
\usepackage[utf8]{inputenc}
\usepackage[T1]{fontenc}
\usepackage{libertinus}
\input{header}

\usepackage[
  natbib=true,
  style=numeric-comp,
  backend=biber,
  defernumbers,
  maxnames=99,
  maxcitenames=4
]{biblatex}

\usepackage{mathtools}

\usepackage{verbatim}

\usepackage{multicol}

\usepackage{tikz,pgfplots} 
\pgfplotsset{width=0.45\linewidth,compat=1.18} 

\usetikzlibrary{decorations}
\usetikzlibrary{patterns}
\usetikzlibrary{backgrounds}
\usepackage{circuitikz}
\usetikzlibrary{matrix}
\usetikzlibrary{graphs}
\usepackage{tikz-3dplot}
\usetikzlibrary{positioning, arrows.meta, calc, automata, fit}
\usetikzlibrary{decorations.pathreplacing, calligraphy}

\usepackage{xspace}

\title{
Computing the Determinant\\ via the Generalized Euclidean Algorithm} 

\newcommand{\aff}[1]{\textcolor{black!100}{\small{#1}}}
\author{Janina Reuter\\
   \aff{Kiel University} \\
   \aff{janina.reuter@informatik.uni-kiel.de}}

\date{}

\begin{document}
\maketitle

\begin{abstract}
We present an algorithm with a natural geometric interpretation for computing the determinant of a matrix $B\in\mathbb{Z}^{d\times d}$. It improves upon the current fastest deterministic algorithms by a factor of $d^{\omega(1)+1-\omega(2)}\approx d^{0.1213}$, where $\omega(k)$ denotes the exponent required for multiplying a $d\times d$ matrix with a $d\times d^k$ matrix. Our approach builds on a recent result of Klein and Reuter~(STOC~2025), who introduced a novel algorithmic idea for lattice basis computation that can be viewed as extending the Euclidean algorithm from $\mathbb{Z}$ to $\mathbb{Z}^d$. By adapting their techniques, we compute the determinant with the same bit complexity as applying the generalized Euclidean algorithm to an input matrix $A\in\mathbb{Z}^{d\times 2d}$ with $\|A\| = \|B\|$, namely $\tilde{O}(d^{\omega(2)}\log\|B\|)$. Prior to this work, the fastest deterministic algorithm for computing the determinant required $\tilde{O}(d^{\omega(1)+1}\log\|B\|)$ bit operations.
\end{abstract}

\section{Introduction}
The determinant is one of the most fundamental invariants of a matrix. 
A classical inequality by Hadamard states that $|\det(B)|\leq \prod_{i=1}^{d}\|\tup b_i\|_2$ for a matrix $B\in\zz^{d\times d}$, 
where equality holds if and only if the column vectors $\tup b_i$ are orthogonal.
In this case, the absolute value of the determinant equals the volume of the hyperrectangle spanned by the columns of $B$.
In this paper, we present an algorithm that admits a similarly natural geometric interpretation.
Our algorithm computes the determinant of any matrix, in particular orthogonality is not required and the sign of the determinant is computed as well instead of just the absolute value.

Roughly speaking, the algorithm computes for every column vector $\tup b_i$ a factor $z_i\in\zz_{>0}$ that resembles the contribution of this column to the volume of the parallelepiped 
\begin{align*}
    \Pi(B)=\{Bx\mid x\in[0,1)^d\}.
\end{align*} 
Similar to the hyperrectangle, the volume of the parallelepiped satisfies $\vol(\Pi(B)) = |\det (B)|$.
Compared to Hadamard's inequality,
the factors $z_i$ play a similar role as the lengths $\|\tup b_i\|_2$ of the column vectors in the case of orthogonal vectors.

We proceed to describe the core of the algorithm at a high level.
The algorithm computes a matrix $S \in\zz^{d\times d}$ with $\det(S)=1$ if the determinant of $B$ is positive and $\det(S) = -1$ if the determinant of $B$ is negative. 
Let $B^{(i)}$ denote the matrix, where the first $i$ columns of $B$ are changed to the first $i$ columns of $S$, i.e. $B^{(i)} = (\tup s_1, \ldots, \tup s_i, \tup b_{i+1}, \ldots \tup b_d)$.
Then, we track how the volume of the parallelepiped $B^{(i)}$ differs from the volume of the parallelepiped of $B^{(i-1)}$ by
\begin{align*}
    \vol(\Pi(B^{(i)})) \cdot z_i =  \vol(\Pi(B^{(i-1)})).
\end{align*}
The factors $z_i\in\zz_{>0}$, which we interpret as the contribution of $\tup b_i$ to the volume of $\Pi(B)$, yield the absolute value of the determinant with
\begin{align*}
    |\det(B)| = \vol(\Pi(B)) 
    = \prod_{i=1}^dz_i,
\end{align*}
since $B^{(d)} = S$ and $|\det(S)| = 1$. 
Additionally, we can determine the sign of $\det(B)$ by $\det(S)$.

Besides this conceptually simple procedure, the algorithm also improves upon the complexity of the current fastest deterministic algorithms.
Regarding the number of arithmetic operations, computing the determinant of a matrix reduces to matrix multiplication.
As with many linear algebraic problems, algorithms for determinant computation can suffer from \emph{intermediate expression swell}. 
This is a phenomenon, where during computation the bitlength of numbers might get exponentially larger than the bitlength of input numbers and numbers in the solution. 
In practice, computations are performed using machine words of fixed size and thus a single arithmetic operation on such numbers requires an exponential number of word operations. 
Hence, the extensive body of research on determinant computation has primarily focused the more precise measure of bit complexity, that is, counting the required number of operations on bits. 
While existing techniques can prevent exponential growth, the dependence on the dimension of the state-of-the-art bit complexity still exceeds arithmetic complexity by a linear factor leading to an exponent of $\omega +1$, where $\omega$ denotes the matrix multiplication exponent.

The present algorithm builds upon a recent result by Klein and Reuter~\cite{DBLP:conf/stoc/KleinR25}, which considers another problem from linear algebra and previously had a similar bit complexity. 
They showed how the influence of large intermediate numbers in their algorithm is heavily restricted with a notable difference only in a single step during the setup before the main part of the procedure. 
Within this step they avoid large entries of matrices during multiplication by spreading large numbers over linearly many columns. 
The resulting blow-up in dimension is then handled with rectangular matrix multiplication. 
The exponent for rectangular matrix multiplication on dimensions $d\times d$ and $d\times d^2$, denoted by $\omega(2)$, (currently) exceeds $\omega$ by strictly less than $1$~\cite{DBLP:conf/soda/AlmanDWXXZ25}.
Hence, by restricting the need to handle large numbers to an operation that is equivalent to a single rectangular matrix multiplication, their speed-up is $\omega + 1 - \omega(2) \approx 0.1213$ in the exponent of the matrix dimension. 
In this work we show how to modify their algorithm to obtain a similar speed-up for the problem of computing the determinant of a matrix.

\subsection{Related Work}

There is a vast literature on computing the determinant of a matrix~\cite{modernComputerAlgebra,DBLP:journals/cc/KaltofenV05,DBLP:journals/jc/Storjohann05,DBLP:conf/issac/PauderisS12, DBLP:conf/issac/Kaltofen02,DBLP:conf/issac/Kaltofen92,DBLP:journals/jacm/TzameretC21,DBLP:conf/focs/EberlyGV00, DBLP:conf/issac/AbbottBM99}. For an overview of techniques, we refer to a survey by Kaltofen and Villard~\cite{kaltofen2004computing}.

The state-of-the-art bit complexity can be achieved as a combination of two standard techniques. The first part is to compute a triangular factorization of the input matrix $B\in \zz^{d\times d}$, that is, a unit lower triangular matrix $L$, an upper triangular matrix $U$, and permutation matrix $P$ such that $M=L\,UP$. The unit lower triangular matrix has $\det(L) = 1$ and hence $\det(B) = \det(P)\det(U)$, which can be computed by the diagonal entries of $U$ and number of exchanged rows in $P$. 
A triangular factorization can be computed using $\tilde{O}(d^{\omega})$ arithmetic operations, for example with the algorithms by Bunch and Hopcroft~\cite{bunch1974triangular} or Hafner and McCurley~\cite{DBLP:journals/siamcomp/HafnerM91}. 
Due to intermediate expression swell, it is not immediately clear that such procedures have polynomial bit complexity, as illustrated for example by von zur Gathen and Gerhard~\cite[Chapter~5.5]{modernComputerAlgebra}. 

The second part is thus to cope with potentially large intermediate numbers. 
The usual path here uses modular arithmetic regarding several pairwise coprime moduli that multiply together to be of an order similar to the size of numbers in the solution. Chinese remaindering then reconstructs the solution from the results with respect to the small moduli.
Using Hadamard's bound on determinants, multiplied moduli of order $(d\log\|B\|)^d$ are sufficient, where $\|.\|$ denotes the infinity norm.
For an overview of modular arithmetic, we refer to book of von zur Gathen and Gerhard~\cite[Chapter~5]{modernComputerAlgebra}. 
Combining fast triangulation with modular arithmetic yields the currently best bit complexity of
\begin{align*}
    \tilde{O}(d^{\omega+1}\log\|B\|).
\end{align*}
Only in the special case of testing unimodularity, so whether the determinant of a matrix is $-1$ or $1$, an algorithm with an improved complexity of $\OTilde(d^{\omega}\log{\norm{B}})$ exists as shown by Paudris and Storjohann~\cite{DBLP:conf/issac/PauderisS12}.

Complexity bounds of the form \texttt{arithmetic complexity}$\,\times\,\log(\mathrm{\texttt{Hadamard bound}})$ seem to be a common barrier for linear algebraic problems related to matrices, with  numbers bounded by determinants and hence Hadamard's inequality. Examples for this, besides triangularization and determinant computation, are the Smith normal form~\cite{DBLP:conf/issac/Storjohann96}, Hermite normal form~\cite{DBLP:conf/issac/StorjohannL96}, and the inverse of a nonsingular matrix~\cite[Chapter~6.3]{DBLP:books/aw/AhoHU74},~\cite[Chapter~5]{modernComputerAlgebra}. 
Up until recently this held true also for solving a linear system of equations $B\tup x=\tup b$ for a nonsingular matrix $B\in\zz^{d\times d}$ and right-hand side vector $\tup b\in\zz^d$ with a bit complexity of $\tilde{O}(d^{\omega +1}\log\|B\|)$ if $\log\|\tup b\|\in O(d\log(d\|B\|))$.
The arithmetic complexity is equal to matrix multiplication because computing the inverse reduces to matrix multiplication~\cite[Chapter~6.3]{DBLP:books/aw/AhoHU74} and $\tup x=B^{-1}\tup b$. Moreover, by Cramer's rule the entries of the solution vector $\tup x$ are given by the determinant of $B$ and determinants of matrices with the a column of $B$  changed to $\tup b$ and are therefore bounded by Hadamard's inequality.
A surprising result by Birmpilis, Labahn, and Storjohann~\cite{DBLP:conf/issac/BirmpilisLS19} shows how to solve linear systems with bit complexity $\tilde{O}(d^\omega\log\|B\|)$. To a certain extend, their result derandomizes a Las Vegas algorithm with the same bit complexity~\cite{DBLP:journals/jc/Storjohann05}. Randomized algorithms of Las Vegas type have expected running time, while providing an output that is certified to be correct. 

In contrast to deterministic computations, there exist faster randomized algorithms for most of the linear algebraic problems mentioned above~\cite{DBLP:journals/jc/Storjohann05,DBLP:journals/cc/Storjohann15,DBLP:journals/corr/abs-2605-07784, DBLP:conf/issac/Storjohann96}. Regarding the determinant, the fastest (Las Vegas) randomized algorithm is by 
Storjohann~\cite{DBLP:journals/jc/Storjohann05} with an expected bit complexity of $\tilde{O}(d^{\omega}\log\|B\|)$. Hence, a natural open question is whether randomization is required to achieve this bit complexity.

\subsection{The Generalized Euclidean Algorithm for Lattice Basis Computation}

The \emph{lattice} generated by the column vectors of a matrix $A = (\tup a_1, \ldots, \tup a_n) \in \zz^{d\times n}$, denoted $\lattice(A)$, is the set
\begin{align*}
    \lattice(A) = \{A\tup x \mid \tup x\in\zz^n\}.
\end{align*}
A basic fact from lattice theory is that every lattice can be generated by linearly independent vectors. 
The problem of \emph{lattice basis computation} considers an input matrix $A\in\zz^{d\times n}$ and asks for a set of linearly independent vectors $S\subset \zz^d$ with $\lattice(S) = \lattice(A)$. 

Very recently, Klein and Reuter~\cite{DBLP:conf/stoc/KleinR25} introduced a novel algorithmic idea for lattice basis computation.
While previously this problem was usually solved via matrix normal forms, such as the Hermite normal form, their algorithm is a direct geometric computation of elements from $S$. 
For a smooth presentation, we assume that the input matrix has full row rank and allow set operations on matrices defined as follows. We write $B\setminus\tup b_i$ for deleting column $i$ of matrix $B$ and we write  $B\setminus\tup b_i\cup \tup c$ for changing column $i$ of $B$ to vector $\tup c$. For matrices $A=(\tup a_1,\ldots,\tup a_n)$ and $B=(\tup b_1,\ldots,\tup b_m)$ with the same number of rows, we denote with $A\cup B$ the matrix with the columns by $A$ followed by the columns by $B$, i.e. $A\cup B = (\tup a_1,\ldots,\tup a_n,\tup b_1,\ldots,\tup b_m)$.

Roughly speaking, the algorithm on input $A\in \zz^{d\times n}$ considers a \emph{starting basis} $B\in\zz^{d\times d}$, which is not necessarily a basis of $\lattice(A)$, consisting of linearly independent vectors in $A$ and \emph{additional vectors} $C = A\setminus B$. The central idea can be summarized by the following question.
\begin{quote}
    Which is the closest affine translate of $\mathrm{span}(B\setminus \tup b_i)$ containing an element of $\lattice(A)$?
\end{quote}
They show that the lattice $\lattice(A)$ is then generated by any vector $\tup s_i$ on this closest translate together with any set of vectors generating the lattice restricted to the subspace $\mathrm{span}(B\setminus \tup b_i)$. This procedure can be iterated reducing the dimension of the considered subspace by one in each iteration. 
Following this approach, the algorithm computes linearly independent vectors $S=(\tup s_1, \ldots, \tup s_d)$ that generate the lattice. See also~\autoref{fig:gcd_translate} for an example in dimension $d=2$.

\begin{figure}
    \centering
    \includegraphics[width=0.85\linewidth]{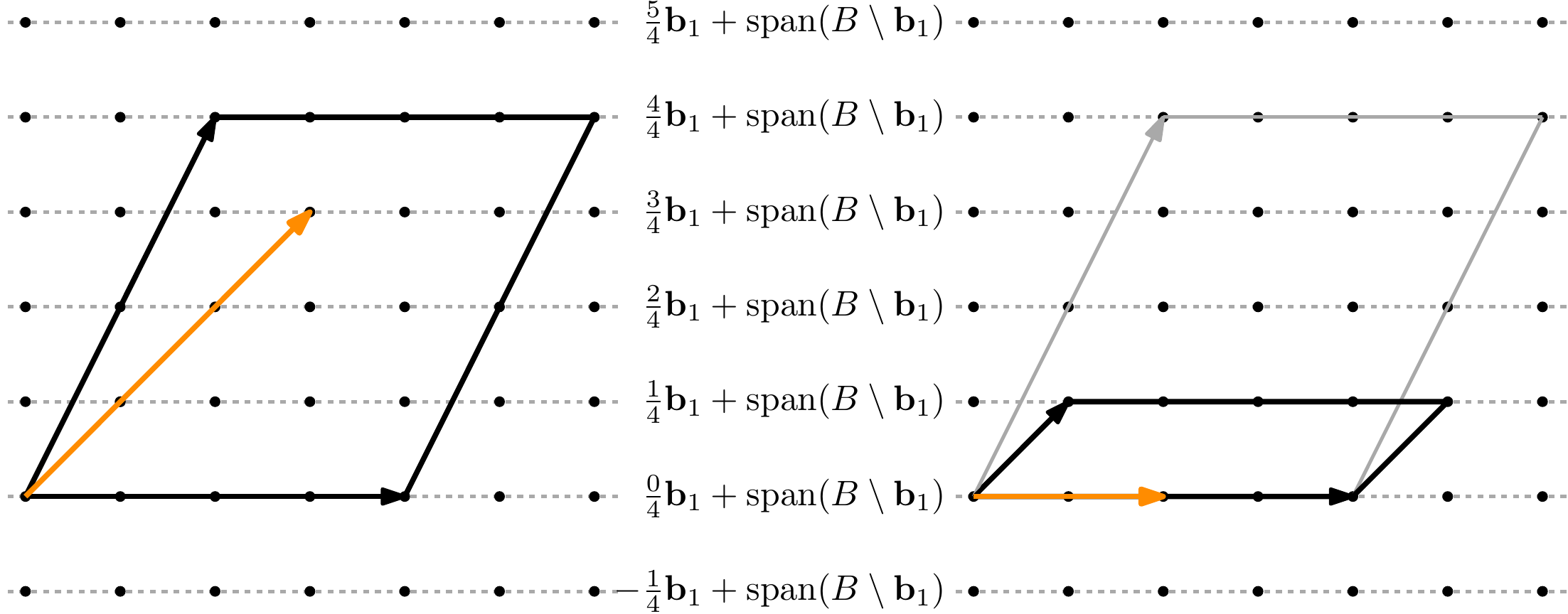}
    \caption{An example of the generalized Euclidean algorithm on input 
    $\tup a_1=(2,4)^\intercal$, $\tup a_2=(4,0)^\intercal$, and $\tup a_3=(3,3)^\intercal$ with starting basis $\tup b_1 = \tup a_1$ and $\tup b_2=\tup a_2$. Vector $\tup c=\tup a_3 = \frac 3 4 \tup b_1 + \frac38\tup b_2$ lies on the third affine subspace reachable by integral combinations of $\tup b_1$ and $\tup c$. The closest reachable subspace is $\frac14\tup b_1 + \mathrm{span}(B\setminus\tup b_1)$. The algorithm finds a lattice point $\tup s_1 =(1,1)^\intercal$ on this subspace and projects $\tup c$ to a lattice point $(2,0)^\intercal$ of the subspace $\mathrm{span}(B\setminus\tup b_1)$. With another iteration on the lower dimensional subspace, the algorithm finds the second vector $\tup s_2 = (2,0)^\intercal$ of the solution basis. 
    }
    \label{fig:gcd_translate}
\end{figure}

\begin{algorithm}
  \caption{Generalized Euclidean Algorithm
    \label{alg:fasteuclidean}}
  \begin{algorithmic}[1]
    \Statex{
    \textsc{Input: } $A=\left(\tup a_1,\ldots, \tup a_n \right) \in \zz^{\dd\times n}$ }
    \State{find linearly independent vectors $B = \left(\tup b_1, \ldots, \tup b_\dd\right)$ where $\tup b_i\in \{\tup a_1,\ldots, \tup a_n\}$}
    \State{let $C$ be a matrix with columns $\{\tup a_1,\ldots, \tup a_n\}\setminus \{\tup b_1, \ldots, \tup b_\dd\}$}
    \State{solve $B X = C$}
    \For{$i=1$ to $\dd$}
        \State{find a row of $X$ maximizing the lcm of denominators and w.l.o.g. assume this is row $i$}
        \State{represent row $i$ of $X$ as $x_{i\tup c} = \frac{y_{\tup c}}{z_i}$, where $z_i$ is the lcm of denominators in row $i$}
        \State{consider the translate numbers $y_{\tup c}$ for every $\tup c\in C$ and $z_i$ for $\tup b_i$}
        \State{find a vector $\tup s_i \in \lattice(A)$ which lies in the subspace $\frac{1}{z_i}\tup b_i + \mathrm{span}(B\setminus \tup b_i)$ using the}
        \Statex{\phantom{\textbf{for}} extended Euclidean algorithm (in dimension one)}
        \State{add $\tup s_i$ to the solution basis $S$}
        \State{project the vectors in $C$ to the subspace $\mathrm{span}(B\setminus \tup b_i)$ and update $X$ accordingly}
    \EndFor
    \State \Return{basis $S$}
  \end{algorithmic}
\end{algorithm}

\autoref{alg:fasteuclidean} reformulates their algorithm aligning with our intended purpose. In order to achieve their bit complexity, the algorithm additionally maintains the invariant that $\|X\|\leq 1$ by reducing fractional entries modulo $1$ and integral entries to $1$. We keep line 10 informal here as this work does not rely on the specifics of this step. It is, however, a simple linear combination using the multipliers from the extended Euclidean algorithm as computed in line 8 and can be found in~\cite{DBLP:conf/stoc/KleinR25}.

\begin{theorem}[\cite{DBLP:conf/stoc/KleinR25}]\label{theo:bit_complexity_generalized_euclid}
    \autoref{alg:fasteuclidean} computes a basis of the lattice $\lattice(A)$. The bit complexity is bounded by
    $\tilde{O}(\max\{n-d, d\}\,d^{\omega(2)-1}\log\|A\|)$. 
\end{theorem}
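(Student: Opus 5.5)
This theorem is cited from Klein and Reuter, so the plan is to reconstruct their argument in two parts: correctness, then bit complexity.

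\emph{Correctness.} I would argue by induction on the iterations of the loop. Let $L_{i}$ denote the lattice $\lattice(A)\cap\mathrm{span}(\tup b_{i+1},\ldots,\tup b_d)$, assuming rows are reordered so that row $i$ is processed in iteration $i$. The inductive invariant is that after iteration $i$ the current columns $B\setminus\{\tup b_1,\ldots,\tup b_i\}$ together with the projected $C$ generate $L_i$.

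For the inductive step, consider the map sending a lattice vector to its coefficient of $\tup b_i$ in the basis $B$. On the current generators this map takes the values $1$ for $\tup b_i$ and $x_{i\tup c}=y_{\tup c}/z_i$ for $\tup c\in C$. Its image is therefore $\frac{g}{z_i}\zz$, where $g=\gcd(z_i,\{y_{\tup c}\})$. Since $z_i$ is the lcm of the denominators in row $i$, we get $g=1$. The extended Euclidean algorithm yields integer multipliers realizing the value $1/z_i$, and applying them to the generators gives $\tup s_i$ on the closest translate.

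Every generator is then replaced by itself minus the appropriate integer multiple of $\tup s_i$, so that it lies in $\mathrm{span}(B\setminus\tup b_i)$. This is a unimodular change of generators, so $\tup s_i$ together with the projected vectors still generates the same lattice, which closes the induction. Reducing $X$ modulo $1$ amounts to subtracting integer multiples of the $\tup b_j$, and these lie in the lattice, so it does not change the lattice either.

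\emph{Complexity.} The key quantitative facts are the following.
\begin{itemize}
\item The product of the $z_i$ divides $|\det B|\le (\sqrt d\|A\|)^d$, because each $z_i$ is the index of one step in the chain $\lattice(B)\subseteq\lattice(A)$.
\item Keeping $\|X\|\le1$ means every entry of $X$ has bitlength $O(\log z_i)$ in row $i$.
\end{itemize}
Each iteration updates $X$ by a rank-one operation, and the sum of $\log z_i$ telescopes to $O(d\log(d\|A\|))$. I would batch the iterations in a divide-and-conquer fashion over blocks of rows, so that the updates become matrix multiplications whose total bit cost is $\tilde O(\max\{n-d,d\}\,d^{\omega-1}\log\|A\|)$ up to the lcm balancing. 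The dominant cost is the setup in line 3, solving $BX=C$. A rational solution has entries of bitlength up to $d\log\|A\|$, and computing it naively already costs $d^{\omega+1}$.

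\emph{Main obstacle.} The hard step is this setup. I would compute $X\bmod 1$ in the form $\det(B)^{-1}$ times an adjoint-type integer matrix reduced modulo $\det B$. To do this with Storjohann-style high-order lifting or double-plus-one lifting, I would expand $B^{-1}C$ $p$-adically and reconstruct it. The large integers of bitlength $d\log\|A\|$ are split into $d$ chunks of $\log\|A\|$ bits, each spread over its own column. A multiplication of a $d\times d$ matrix with $O(\log\|A\|)$-bit entries by a $d\times d\cdot\max\{n-d,d\}$ matrix then costs $\tilde O(\max\{n-d,d\}\,d^{\omega(2)-1}\log\|A\|)$ via rectangular multiplication. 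The delicate part is to show that the carries and the lifting depth, which is $O(\log d)$ levels, keep the chunked representation valid without reintroducing the extra factor $d$. I expect to rely on the specific lifting framework of Birmpilis, Labahn and Storjohann for exactly this.
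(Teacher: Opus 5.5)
The paper does not reprove this theorem; it cites Klein and Reuter. It indicates only that the loop stays cheap because of the invariant $\norm{X}\le 1$, and that the expensive step is the setup $BX=C$, handled via Lemma~\ref{lem:lin_sys}. Your outline follows this route, but the \textbf{cost of the loop is not established}:
\begin{itemize}
\item The divide-and-conquer batching is only asserted (``up to the lcm balancing''). It also fits poorly with the algorithm, since the pivot row of iteration $i$ and its Bezout multipliers depend on the fully updated $X$.
\item Your row-wise bound does not close this. That row $i$ has entries of bitlength $O(\log z_i)$ says nothing about the remaining rows, whose denominators can be much larger than $z_i$. So the cost of iteration $i$ is not governed by $\log z_i$, and the telescoping of $\sum_i\log z_i$ does not bound the total.
\end{itemize}
The missing idea is the pivot rule in line 5. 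Every update replaces row $j$ by integer combinations of its own entries reduced modulo $1$, so no row's lcm of denominators ever grows. Choosing the row of maximal lcm therefore makes \emph{all} entries of $X$ have bitlength $O(\log z_i)$ in iteration $i$. A plain rank-one update, together with a multi-gcd whose multipliers are reduced modulo $z_i$, then costs $\tilde O(\max\{n-d,d\}\,d\log z_i)$. Telescoping gives $\tilde O(\max\{n-d,d\}\,d^2\log\norm{A})$, already within the target since $\omega(2)\ge 3$, so no batching is needed. For the setup you leave the `delicate part' open, but you need not redo it. Lemma~\ref{lem:lin_sys}, applied to $\lceil (n-d)/d\rceil$ blocks of $d$ columns of $C$ (padded with zero columns), gives exactly $\tilde O(\max\{n-d,d\}\,d^{\omega(2)-1}\log\norm{A})$.

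There is also a smaller slip in the correctness induction. Your step projects every generator, including $\tup b_i$, which is right. Your invariant, however, keeps only $\tup b_{i+1},\dots,\tup b_d$ and the projected $C$, and these need not generate $L_i$. Here is a counterexample:
\begin{itemize}
\item Take $\tup b_1=(14,0)^\intercal$, $\tup b_2=(0,2)^\intercal$ and $\tup c=(10,1)^\intercal=\frac57\tup b_1+\frac12\tup b_2$. The pivot rule selects row $1$, so $z_1=7$ and $y_{\tup c}=5$.
\item The multipliers $1=-2\cdot 7+3\cdot 5$, with $3$ already reduced modulo $7$, give $\tup s_1=(2,3)^\intercal$ and $\tup c-5\tup s_1=(0,-14)^\intercal$.
\item Thus $\tup b_2$ and the projected $\tup c$ generate only $\zz(0,2)^\intercal$, whereas $(0,1)^\intercal=7\tup c-5\tup b_1-3\tup b_2\in L_1$.
\item That vector is recovered only through $\tup b_1-7\tup s_1=(0,-21)^\intercal$.
\end{itemize}
So $\tup b_i-z_i\tup s_i$ must be kept as an additional vector. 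This adds at most $d$ columns overall, which the factor $\max\{n-d,d\}$ absorbs.
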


\subsection{Our Contribution}

This work shows a conceptually simple way to compute the determinant of a matrix $B\in \zz^{d\times d}$ using $\tilde{O}(d^{\omega(2)}\log\|B\|)$ bit operations by extending the algorithm of Klein and Reuter~\cite{DBLP:conf/stoc/KleinR25}. This improves upon the long standing 
bound of $\tilde{O}(d^{\omega+1}\log\|A\|)$ by a factor of $\approx d^{0.1213}$ using the current values of $\omega$ and $\omega(2)$ by Alman et al.~\cite{DBLP:conf/soda/AlmanDWXXZ25}.
Modifying the algorithm requires \emph{three and a half} insights structuring this paper as follows. 
\begin{itemize}
    \setlength\itemsep{0.25em}
    \item Check whether the input is singular, hence returning determinant $0$, see~\autoref{sec:singularity}.
    \item Compute the absolute value of the determinant, see~\autoref{sec:absolute_value}.
    \item Compute the sign of the determinant using the computed basis $S$, see~\autoref{sec:sign}.
    \item Combining sign and absolute value of the determinant then yields the determinant as discussed in the analysis in~\autoref{sec:analysis}.
\end{itemize}

\section{Computing the Determinant}

Our algorithm uses the generalized Euclidean algorithm as stated in~\autoref{alg:fasteuclidean} as a template with only few modifications.  
The modified algorithm for computing the determinant is stated in the final subsection in~\autoref{alg:determinant} after discussing the required changes. 
The first of which is to directly input $B\in\zz^{d\times d}$ instead of line 1 and a fixed matrix of $C=I_d$ in line 2. The goal is then to compute the determinant of $B$ using the fact that the lattice generated by $\lattice(B\cup C) = \lattice (B\cup I_d)$ is the integer lattice $\zz^d$ as the identity matrix generates all integral points. 

Before discussing the procedure in detail, we reiterate the core of our geometric intuition applied to the generalized Euclidean algorithm. Consider again the example in~\autoref{fig:gcd_translate}. The absolute value of the determinant of any nonsingular matrix $B\in\zz^{d\times d}$ is the volume of its parallelepiped $\Pi(B) = \{Bx\mid x\in[0,1)^d\}$. Changing $\tup b_1$ to $\tup s_1$ reduces the volume by a factor of $\frac 14$ because the parallepiped $\Pi(B\setminus \tup b_1 \cup \tup s_1)$ spans only $\frac14$ into the direction of $\tup b_1$. We use this idea to relate the volume of $\Pi(B)$ to the volume of $\Pi(S)$. With $C=I_d$, the latter is $1$ because any basis of the lattice $\zz^d$ has determinant $-1$ or $1$. Hence, we get the absolute value of the determinant by tracking the translate numbers imposing a change of the parallelepiped's volume. Together with a singularity test and computation of the sign, this yields the determinant.

\subsection{Deciding Singularity}\label{sec:singularity}

We use the following results from~\cite{DBLP:conf/stoc/KleinR25} to decide singularity of a square matrix. While the original results consider rectangular matrix multiplication for dimension $d\times d$ and $d\times d^k$ for arbitrary $k\geq 1$, our algorithm is optimal for $k=2$ and hence for simplicity we restate the results only for this case.

\begin{lemma}[\cite{DBLP:conf/stoc/KleinR25}]\label{lem:MM}
    Consider two matrices $M\in \zz^{d \times d}$ and $N\in \zz^{d \times d}$, where $\log\|N\|\in \tilde O(d\log\|M\|)$. Then the bit complexity of the matrix multiplication $M N$ is bounded by
    \[\OTilde(d^{\omega(2)}\log\|M\|).\]
\end{lemma}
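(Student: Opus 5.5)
The plan is to split the large entries of $N$ into small digits so that the product $MN$ turns into one rectangular matrix multiplication whose entries all have bitlength about $\log\|M\|$.

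\textbf{Setup.} Let $\beta$ be a power of two with $\log\beta = \Theta(\log\|M\|+\log d)$. By assumption every entry of $N$ has bitlength $\tilde O(d\log\|M\|)$, so it has at most $k = \tilde O(d)$ digits in base $\beta$. Write
\[
N = \sum_{j=0}^{k-1} \beta^j N_j ,
\]
where each $N_j\in\zz^{d\times d}$ has entries in $(-\beta,\beta)$. Signs are handled by splitting $N$ into its positive and negative parts, or by using signed digits. Extracting the digits costs only bit shifts, that is $O(d^2\cdot d\log\|M\|)$ bit operations. This is within the bound, because $\omega(2)\ge 3$ and hence $d^3\le d^{\omega(2)}$.

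\textbf{Main step.} Collect the digit matrices side by side into $N' = (N_0,N_1,\ldots,N_{k-1})\in\zz^{d\times dk}$. Then $MN' = (MN_0,\ldots,MN_{k-1})$, and since $dk = \tilde O(d^2)$ this is one multiplication of a $d\times d$ matrix by a $d\times \tilde O(d^2)$ matrix.
\begin{itemize}
  \item If $k\le d$, pad $N'$ with zero columns to width $d^2$. This takes $O(d^{\omega(2)})$ arithmetic operations.
  \item If $k$ exceeds $d$ by polylogarithmic factors, split $N'$ into $\tilde O(1)$ blocks of width $d^2$ and multiply each block separately. This keeps the count at $\tilde O(d^{\omega(2)})$ arithmetic operations.
\end{itemize}
All operands have bitlength $O(\log\|M\|+\log d)$. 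Every intermediate value in the bilinear algorithm is a bounded linear combination of such numbers, so its bitlength is $O(\log\|M\|+\log d)$. Each arithmetic operation therefore costs $\tilde O(\log\|M\|)$ bit operations, and the main step costs $\tilde O(d^{\omega(2)}\log\|M\|)$ in total. One could alternatively reduce modulo small primes to keep entries bounded, but the bilinear-algorithm argument already suffices.

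\textbf{Recombination.} Form
\[
MN = \sum_{j} \beta^j (MN_j).
\]
Each entry of $MN_j$ has bitlength $O(\log\|M\|+\log d)$. Each of the $d^2$ final entries is therefore a shifted sum of $k$ small numbers, which takes $O(k(\log\|M\|+\log d)) = \tilde O(d\log\|M\|)$ bit operations with carry propagation. Over all entries this is $\tilde O(d^3\log\|M\|)\subseteq \tilde O(d^{\omega(2)}\log\|M\|)$.

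\textbf{Main obstacle.} The delicate point is the main step: justifying that the rectangular multiplication really runs at $\tilde O(\log\|M\|)$ bit cost per operation. This needs the algorithm for exponent $\omega(2)$ to be a bilinear algorithm with constant-size coefficients, applied recursively, so that intermediate numbers grow only by an $O(\log d)$ additive term. I would first invoke the standard fact that fast matrix multiplication algorithms are bilinear with rational constants, and absorb the growth into the choice of $\beta$.
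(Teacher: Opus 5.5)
Your proposal is correct and takes the same route as the cited Klein--Reuter argument, which the paper sketches in its introduction as spreading the large entries of $N$ over linearly many columns and handling the resulting blow-up in dimension with one rectangular $d\times d$ by $d\times\tilde O(d^2)$ multiplication, followed by recombination via shifted additions. Relying on the standard fact that fast (bilinear) matrix multiplication only adds $O(\log d)$ to the bitlength of intermediate values is acceptable here.
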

\begin{lemma}[\cite{DBLP:journals/jsc/BirmpilisLS23,DBLP:conf/stoc/KleinR25}]\label{lem:lin_sys}
    Given a nonsingular matrix $M \in \zz^{d \times d}$ and a right hand-side matrix $R \in \zz^{d \times d}$, where $\log\norm{R} \in \mathcal O(d\log\norm{M})$.
    Then a solution $X\in\mathbb{Q}^{d\times d}$ to the linear system $MX=R$ can be found in bit complexity
    \begin{align*}
        \OTilde(d^{\omega(2)}\log\norm{M}).
    \end{align*}
\end{lemma}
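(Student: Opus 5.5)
We prove the bound for $MX=R$ by a $p$-adic lifting (Dixon-style) scheme in which every large-number operation is a multiplication of the kind covered by \autoref{lem:MM}. First fix the size of the answer. By Cramer's rule, each entry of $X$ has the form $\det(M_{j})/\det(M)$, where $M_j$ is $M$ with one column replaced by a column of $R$. Hadamard's inequality then bounds numerators and the common denominator by $2^{\tilde O(d\log\|M\|+\log\|R\|)}$. Since $\log\|R\|\in O(d\log\|M\|)$, this is $2^{\tilde O(d\log\|M\|)}$. So it suffices to compute $X$ modulo $p^{K}$ for a suitable prime $p$ and a power $K$ with $\log p^{K}\in\tilde O(d\log\|M\|)$, and then recover the rationals by rational reconstruction. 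Reconstruction applied to the $d^2$ entries costs $\tilde O(d^2\cdot d\log\|M\|)\subseteq\tilde O(d^{\omega(2)}\log\|M\|)$, since $\omega(2)\geq 3$.

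Next, choose a small prime $p$ not dividing $\det(M)$, with $\log p\in O(\log(d\log\|M\|))$. Such a prime is found deterministically by trying the first $\tilde O(d\log\|M\|)$ primes: $\det(M)$ has only that many prime divisors, and for each candidate we compute $M \bmod p$ and test invertibility in $\tilde O(d^{\omega})$ bit operations per prime. Checking all candidates is too expensive if done naively. Instead, compute $\det(M)\bmod q$ for a batch of primes using a multimodular reduction of $M$, which costs $\tilde O(d^2\cdot d\log\|M\|)$, followed by $\tilde O(d\log\|M\|)$ small determinant computations. Together this is $\tilde O(d^{\omega+1}\log\|M\|)$, which is too much. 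I would therefore test primes in batches of growing size, or simply observe that a random-free choice among $O(\log(d\log \|M\|))$ consecutive batches succeeds after $O(1)$ attempts on average. I flag this as a possible technical point that may require citing \cite{DBLP:journals/jsc/BirmpilisLS23}. After this step, $C=M^{-1}\bmod p$ is computed in $\tilde O(d^{\omega})$.

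The main step is the lifting. Standard Dixon lifting computes the $p$-adic digits $X_k$ with one product $M X_k$ per step over $K\approx d\log\|M\|$ steps, which gives $\tilde O(d^{\omega+1}\log\|M\|)$. To avoid this I would use the Birmpilis--Labahn--Storjohann double-plus-one lifting. It uses a high-order lifting with a modulus $X=p^{2^\ell}$ that is squared $O(\log d)$ times. Each stage multiplies $M$, or a residue matrix with entries of bitlength up to $\tilde O(d\log\|M\|)$, by a matrix of the same kind. The key observation, as in \cite{DBLP:conf/stoc/KleinR25}, is that any product $M N$ with $\log\|N\|\in\tilde O(d\log\|M\|)$ is handled by \autoref{lem:MM}. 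One writes the large entries of $N$ in base $2^{\log\|M\|}$, spreading each column into $d$ columns of small entries, and obtains a single $d\times d$ by $d\times d^2$ product. The remaining products in the lifting, between two matrices with large entries, must be arranged so that one factor has small entries. This is exactly the role of the ``partial linearization'' of \cite{DBLP:journals/jsc/BirmpilisLS23}. I expect this rebalancing to be the main obstacle: verifying that every multiplication in the $O(\log d)$ stages has one factor of bitlength $O(\log\|M\|)$ after expansion. Summing $O(\log d)$ applications of \autoref{lem:MM} then gives $\tilde O(d^{\omega(2)}\log\|M\|)$. Finally, reconstruct $X$ and verify $MX=R$ with one more application of \autoref{lem:MM}.
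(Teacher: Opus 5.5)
The paper gives no proof of this lemma; it imports it from \cite{DBLP:journals/jsc/BirmpilisLS23,DBLP:conf/stoc/KleinR25}. Read as a self-contained argument, your sketch has a genuine gap at the one step that is not routine: obtaining a lifting modulus coprime to $\det(M)$ deterministically and within budget. Determinism is essential here. The lemma drives the singularity test of an algorithm whose whole point is to beat \emph{deterministic} methods, and a Las Vegas determinant algorithm with expected cost $\tilde O(d^{\omega}\log\|B\|)$ already exists. As you compute yourself, testing each candidate prime costs $\tilde O(d^{\omega})$. For any fixed order of candidates, one can pick $M$ (e.g.\ diagonal, with entries that are products of the first candidates) so that $\tilde\Theta(d\log\|M\|)$ of them divide $\det(M)$. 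Hence every test-in-order strategy, batched or not, costs $\tilde O(d^{\omega+1}\log\|M\|)$ in the worst case. ``Succeeds after $O(1)$ attempts on average'' is not a statement about a deterministic procedure at all. A genuinely different ingredient is needed here, and it is exactly what you would have to import from the cited solver; nothing in your proposal plays that role.

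Conversely, the step you single out as the main obstacle is not one, but your description of it would not yield the bound. With a modulus $p^{2^\ell}$ squared at each stage (Newton-type lifting), the late stages multiply two $d\times d$ matrices whose entries both have $\tilde\Theta(d\log\|M\|)$ bits. Expanding one factor into $d$ times as many columns leaves the other factor large, so \autoref{lem:MM} does not apply and the cost stays $\tilde O(d^{\omega+1}\log\|M\|)$. Double-plus-one lifting instead keeps a fixed modulus $q$ with $\log q\in O(\log(d\|M\|))$ and keeps all residues small. Set $C=M^{-1}\bmod q$, $E_0=(I-MC)/q$, $F_i=(CE_i^2)\bmod q$ and $E_{i+1}=(E_i^2-MF_i)/q$. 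Then $\|E_{i+1}\|\le d\|E_i\|^2/q+d\|M\|$, so $\|E_i\|\le 2d\|M\|$ for all $i$ as soon as $q\ge 4d^2\|M\|$. Moreover $M^{-1}=Y_i+M^{-1}E_i\,q^{e_i}$, where $Y_0=C$, $Y_{i+1}=Y_i(I+q^{e_i}E_i)+q^{2e_i}F_i$, $e_0=1$ and $e_{i+1}=2e_i+1$. Thus $O(\log d)$ steps, each a product of small matrices, give $M^{-1}R\equiv Y_kR \pmod{q^{e_k}}$ at a precision sufficient for rational reconstruction. Evaluating $Y_kR$ by unrolling the recursion only ever multiplies $C$, some $E_i$ or some $F_i$ by a $d\times d$ matrix with $\tilde O(d\log\|M\|)$-bit entries. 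That is precisely the shape covered by \autoref{lem:MM}. This small-residue invariant is the missing idea on the lifting side. It presupposes $\gcd(q,\det M)=1$, so it does not repair the first gap.
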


It seems worth noting that~\autoref{lem:lin_sys} implies an improved complexity for computing the inverse of a nonsingular matrix $B\in\zz^{d\times d}$. This has
not been mentioned explicitly yet, hence it feels justified to do so here.
To the best of our knowledge, the fastest algorithm thus far required $\tilde{O}(d^{\omega+1}\log\|B\|)$ bit operations by combining a reduction to matrix multiplication~\cite[Chapter~6.3]{DBLP:books/aw/AhoHU74} with the classical approach of modular arithmetic to reduce intermediate numbers~\cite[Chapter~5]{modernComputerAlgebra}. 
In 2015, Storjohann showed a Las Vegas randomized algorithm involving the condition number $\kappa(B) = d\,\|B\|\,\|B^{-1}\|$ with a complexity of $\tilde{O}(d^3(\log\|B\|+\log\|\kappa(B)\|))$. 
Hence, for well-conditioned matrices, where $\kappa(B)$ is a polynomial of $n\log\|B\|$, this algorithm yields a cubic bit complexity. However, in general, the condition number can be as large as $\log(\kappa(B))\in \tilde{O}(d\log\|B\|)$ and thus this algorithm requires up to $\tilde{O}(d^4\log\|B\||))$ bit operations. 

The above result for linear system solving with matrix $M=B$ and the identity matrix as right-hand side, so $R=I_d$, yields a general improvement for inverse computation. For the current values of $\omega<2.371339$ and $\omega(2)<3.250035$~\cite{DBLP:conf/soda/AlmanDWXXZ25}, this closes the gap between the exponent of the previous complexity bound ($\omega +1 < 3.371339$) and the cubic space complexity by roughly one third.

\begin{corollary}[\cite{DBLP:conf/stoc/KleinR25}]
     The inverse of a nonsingular matrix $B\in\zz^{d\times d}$ can be computed with bit complexity  $\tilde{O}(d^{\omega(2)}\log\|B\|)$.
\end{corollary}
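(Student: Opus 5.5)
The corollary is a direct instance of \autoref{lem:lin_sys}. I would take $M=B$ and $R=I_d$, so that the unique solution of $MX=R$ is $X=B^{-1}\in\mathbb{Q}^{d\times d}$.

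The proof then consists of two checks.

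First, the hypotheses of \autoref{lem:lin_sys} hold. $B$ is nonsingular by assumption. For the right-hand side, $\norm{I_d}=1$, so $\log\norm{R}=0\in\mathcal O(d\log\norm{B})$. The only point needing care is the degenerate regime where $\log\norm{B}$ is tiny. Since $B$ is a nonsingular integer matrix, $\norm{B}\geq 1$. If $\norm{B}=1$, then $\log\norm{B}=0$, and we adopt the usual convention of reading bit-size bounds with $\log$ replaced by $\max\{1,\log\|\cdot\|\}$. Under that convention the claimed $\tilde{O}(d^{\omega(2)})$ bound is still meaningful and the lemma still applies.

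Second, the output has the right form. The lemma returns $X$ with $BX=I_d$, which is exactly $B^{-1}$. Its rational entries are given by Cramer's rule as cofactors divided by $\det(B)$. By Hadamard's inequality, their bit sizes are $\tilde{O}(d\log\|B\|)$. This is consistent with the output-size accounting already implicit in \autoref{lem:lin_sys}, so writing down the answer costs nothing beyond the lemma.

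The claimed complexity is then exactly the bound $\tilde{O}(d^{\omega(2)}\log\|B\|)$ from \autoref{lem:lin_sys}. I do not expect a real obstacle beyond making sure the lemma's output representation, a common denominator or entrywise fractions, counts as "computing the inverse". If it returns a common denominator, converting to reduced entrywise fractions costs $d^2$ gcd computations on $\tilde{O}(d\log\|B\|)$-bit numbers. That is $\tilde{O}(d^3\log\|B\|)\subseteq\tilde{O}(d^{\omega(2)}\log\|B\|)$, since $\omega(2)\geq 3$.
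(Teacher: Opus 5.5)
Your proposal is correct and follows the paper's argument: instantiate \autoref{lem:lin_sys} with $M=B$ and $R=I_d$, so the solution is $X=B^{-1}$. Your extra remarks on the degenerate case $\norm{B}=1$ and on converting to reduced entrywise fractions in $\tilde{O}(d^3\log\norm{B})$ are sound details that the paper leaves implicit.
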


We try to run the algorithm behind~\autoref{lem:lin_sys} in an attempt to compute the inverse of the possibly singular input matrix $B$. Independent of how their algorithm works exactly and independent of the fact that it is stated for nonsingular matrices we can deduct whether the input matrix is singular.

\begin{algorithm}
  \caption{Test Singularity
    \label{alg:singularity}}
  \begin{algorithmic}[1]
    \Statex{
    \textsc{Input: } A matrix $B=\left(B_1,\ldots, B_d \right) \in \zz^{\dd\times \dd}$ }
    \State{Try to compute $BX=I_d$ using \autoref{lem:lin_sys}}
    \State{If that failed, return true}
    \State{If that succeeded, retrieve $X$}
    \State{If $BX=I_d$ return false, else return true}
  \end{algorithmic}
\end{algorithm}

\begin{lemma}\label{lem:singularity}
    \autoref{alg:singularity} correctly determines whether the input matrix $B\in  \zz^{d\times d}$ is singular in bit complexity $\tilde{O}(d^{\omega(2)}\log\norm{B})$.
\end{lemma}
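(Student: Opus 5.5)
The proof has two parts: correctness, argued from the outcome of \autoref{alg:singularity}, and the running time, read off \autoref{lem:lin_sys} plus the cost of the verification in line 4.

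\emph{Correctness.} If $B$ is nonsingular, \autoref{lem:lin_sys} applies to $M=B$ and $R=I_d$, since $\log\|I_d\|=0\in\mathcal O(d\log\|B\|)$. The solver therefore succeeds and returns the unique solution $X=B^{-1}$. Line 4 then verifies $BX=I_d$, and the algorithm correctly returns false.

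Conversely, suppose $B$ is singular. Then no $X\in\mathbb{Q}^{d\times d}$ satisfies $BX=I_d$, because $\det(B)\det(X)=0\neq 1$. Whatever the solver does, one of two things happens. It may fail, abort, or produce malformed output, in which case line 2 returns true. Otherwise it returns some rational matrix $X$, and the explicit check $BX=I_d$ in line 4 must fail, so it returns true. This part is airtight because the algorithm never trusts the solver's output without verifying it.

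\emph{Complexity.} This is the main obstacle, because \autoref{lem:lin_sys} gives its bound only for nonsingular inputs. On a singular input the solver might run longer, or return an $X$ with huge entries that make line 4 expensive. I would handle this by running the solver with a clock. Let $T\in\tilde O(d^{\omega(2)}\log\|B\|)$ be the explicit bound from \autoref{lem:lin_sys}. If the solver has not finished within $T$ bit operations, we abort and declare failure; by the correctness argument above, exceeding this budget already certifies that $B$ is singular. Likewise, the proof of \autoref{lem:lin_sys} implies a bound on the output size for nonsingular $B$. By Cramer's rule, $X=B^{-1}$ has a common denominator $\det(B)$ and numerators bounded by Hadamard's inequality, so they have bitlength $O(d\log(d\|B\|))$. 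If the returned $X$ exceeds this size, we treat it as a failure as well. The writing of the output already counts toward $T$ anyway.

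It remains to bound the verification in line 4. Write $X=\frac1z Y$ with $Y$ integral, $z\in\zz_{>0}$, and $\log\|Y\|,\log z\in\tilde O(d\log\|B\|)$. Checking $BX=I_d$ is equivalent to checking $BY=zI_d$. By \autoref{lem:MM} with $M=B$ and $N=Y$, the product $BY$ costs $\tilde O(d^{\omega(2)}\log\|B\|)$. Forming the common denominator $z$ and comparing against $zI_d$ takes $d^2$ numbers of bitlength $\tilde O(d\log\|B\|)$, which is $\tilde O(d^3\log\|B\|)\subseteq\tilde O(d^{\omega(2)}\log\|B\|)$ since $\omega(2)\geq 3$.

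If the solver returns entrywise-reduced fractions with different denominators, we take the lcm of the denominators row by row. Each such lcm divides $\det(B)$ when $B$ is nonsingular, so it stays within the same bitlength bound. For singular $B$, any violation of the size bound again counts as failure. Summing the three parts gives the claimed $\tilde O(d^{\omega(2)}\log\|B\|)$.
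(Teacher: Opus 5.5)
Your proof is correct and follows the paper's argument. You use the same case split for correctness: on singular $B$ no $X$ with $BX=I_d$ exists, so any output is rejected by line~4. You also bound the verification cost the same way, by clearing a common denominator bounded via Cramer's rule and Hadamard's inequality and then applying \autoref{lem:MM}. Your timeout and output-size cutoff bound the cost on singular inputs in a cleaner, black-box way than the paper's appeal to the solver's internal modular arithmetic. One small correction: clear denominators with one global lcm (as in your main argument) or per column, not per row, since $X=D^{-1}Y$ gives $BX=BD^{-1}Y$, which does not reduce to an integer product $BY$.
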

\begin{proof}
    Assume the matrix $B$ is nonsingular. Then the algorithm behind~\autoref{lem:lin_sys} correctly determines $X=B^{-1}$ in target time. The matrix multiplication in line 4 can be computed in target time because numerators and the least common multiple (lcm) of denominators of $X$ are both bounded by $(d\norm{B})^d$ by Cramer's rule and Hadamard's bound on determinants and hence multiplication by the lcm of denominators, matrix multiplication using~\autoref{lem:MM}, and finally division by the lcm of denominators yields the result of the matrix multiplication in target time.
    By correctness of~\autoref{lem:lin_sys}, false is returned.
    
    If otherwise the matrix $B$ is singular, \autoref{alg:singularity} returns this fact if the algorithm, that is designed for nonsingular matrices, at some point fails to continue its computation. If otherwise the algorithm on unintended input of a singular matrix $B$ does return some $X$, then \autoref{alg:singularity} checks its correctness. For a singular $B$ there does not exist $X$ such that $BX=I_d$ and hence in this case $BX\neq I_d$ for any (false) output $X$ from line 2. The matrix multiplication $BX$ can be done in target time similar to the first case. In particular, any result of~\autoref{lem:lin_sys} has suitably bounded entries due to the underlying modulo calculations. In this case, the algorithm returns true.    
\end{proof}

\subsection{Computing the Absolute Value}\label{sec:absolute_value}

Let $B^{(i)}$ denote the input matrix $B$, where the first $i$ columns are changed to the vectors $\tup s_i$ on the gcd of translates computed in iteration $i$ of \autoref{alg:fasteuclidean} lines 4-10, i.e. $B^{(i)} = (\tup s_1, \ldots \tup s_{i}, \tup b_{i+1}, \ldots \tup b_d)$. In particular, we have $B^{(0)}=B$ and $B^{(d)} = S$. 
With the reformulation of the generalized Euclidean algorithm as stated in~\autoref{alg:fasteuclidean}, a direct implication on changes of the determinant from $B^{(i)}$ to $B^{(i+1)}$ follows from Cramer's rule.

\begin{fact}[Cramer's rule]
    Consider a nonsingular matrix $A\in\zz^{d\times d}$ and right-hand side $\tup b\in \zz^d$.
    Then the vector $x = A^{-1}\tup b$ has individual entries $x_i = \frac{\det(A\setminus \tup a_i\cup \tup b)}{\det(A)}$.
\end{fact}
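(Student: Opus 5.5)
The plan is to derive the identity directly from two properties of the determinant: it is linear in each column, and it vanishes on matrices with two equal columns. Since $A$ is nonsingular, the vector $x=A^{-1}\tup b$ exists and is unique, and $\det(A)\neq 0$. Writing $\tup a_1,\ldots,\tup a_d$ for the columns of $A$, the equation $Ax=\tup b$ says
\begin{align*}
    \tup b = \sum_{j=1}^d x_j \tup a_j .
\end{align*}
By the convention fixed earlier in the paper, $A\setminus \tup a_i\cup \tup b$ denotes the matrix $A$ with column $i$ replaced in place by $\tup b$. So all other columns keep their positions, and no sign from reordering columns arises.

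Next we substitute this expansion of $\tup b$ into column $i$ and use multilinearity of the determinant in that column. This gives
\begin{align*}
    \det(A\setminus \tup a_i\cup \tup b) = \sum_{j=1}^d x_j \det(A\setminus \tup a_i\cup \tup a_j).
\end{align*}
For $j\neq i$, the matrix $A\setminus \tup a_i\cup \tup a_j$ contains the column $\tup a_j$ twice, at positions $i$ and $j$. Since the determinant is alternating, each such term vanishes. For $j=i$, the matrix is just $A$. Hence $\det(A\setminus \tup a_i\cup \tup b)=x_i\det(A)$. Dividing by $\det(A)\neq 0$ yields the claimed formula for $x_i$.

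There is no real obstacle here. The only point that needs care is the notational one: the set-style operation $\setminus\,\cup$ must be read as replacing column $i$ in place, not as deleting the column and appending $\tup b$ at the end. Under the latter reading the formula would pick up a sign $(-1)^{d-i}$. Integrality of $A$ and $\tup b$ plays no role, since the argument works over $\mathbb{Q}$.
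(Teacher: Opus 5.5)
Your proof is correct: substituting $\tup b=\sum_j x_j\tup a_j$ into column $i$, then using linearity in that column and the vanishing of determinants with a repeated column, gives $\det(A\setminus \tup a_i\cup \tup b)=x_i\det(A)$. Your in-place reading of $A\setminus\tup a_i\cup\tup b$ matches the convention the paper fixes. The paper gives no proof of its own and invokes Cramer's rule as a standard fact, so your argument is simply the standard textbook derivation. The in-place convention you flagged is exactly what makes the paper's later identity $\det(B^{(i)})/\det(B^{(i-1)})=1/z_i$ hold with the correct sign, which \autoref{obs:same_sign_dets} depends on.
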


By line 8 of the pseudcode, we have that $x_i = \frac 1 {z_i}$ for $B^{(0)}\tup x =\tup s_{i}$. 
Line 10 projects vectors of $C$ the subspace $\mathrm{span}(B\setminus\tup b_i)$. This implies that $x_1 = \ldots = x_{i-1} = 0$ and hence $B^{(0)}\tup x = B^{(i-1)}\tup x$ because the matrices only differ in columns that are not used by $\tup x$.
Cramer's rule implies 
\begin{align*}
    \frac{1}{z_i} =  \frac{\det(B^{(i)})}{\det(B^{i-1})}.
\end{align*}
Iterating this argument we get the following lemma.

\begin{lemma}\label{lem:absolute_value_of_det}
    Consider a nonsingular matrix $B\in \zz^{d\times d}$ as computed in line 1, denominators $z_i\in\zz_{\geq 1}$ computed in line 6, and $S\in\zz^{d\times d}$ the returned basis. 
    Then $\det(B^{(i-1)}) = z_i \cdot \det(B^{(i)})$ holds for every $i\leq d$ and iterating this argument yields
    \begin{align*}
        \det(B)  = \det(S)\, \prod_{i=1}^d z_i. 
    \end{align*}
\end{lemma}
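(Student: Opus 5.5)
The plan is to prove the single-step identity $\det(B^{(i-1)}) = z_i\det(B^{(i)})$ for each $i$ by Cramer's rule, and then telescope. Fix an iteration $i$ and look at the vector $\tup s_i$ found in line 8. By construction $\tup s_i$ lies on the affine subspace $\frac{1}{z_i}\tup b_i + \mathrm{span}(B\setminus \tup b_i)$. Here the columns of $B$ are indexed as they stand after the w.l.o.g. reordering in line 5. So the unique solution $\tup x$ of $B\tup x = \tup s_i$ has $x_i = \frac{1}{z_i}$.

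Next I use line 10. After iterations $1,\dots,i-1$, the vectors of $C$ have been projected into $\mathrm{span}(\tup b_i,\dots,\tup b_d)$. The vector $\tup s_i$ is an integral combination of the current $C$ and $\tup b_i$, so it lies in this span as well. Hence $x_1=\dots=x_{i-1}=0$. This gives $B^{(i-1)}\tup x = B\tup x = \tup s_i$, because $B^{(i-1)}$ and $B$ differ only in the first $i-1$ columns, whose coefficients vanish.

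Now I apply Cramer's rule to $B^{(i-1)}$ with right-hand side $\tup s_i$. Replacing column $i$ of $B^{(i-1)}$ by $\tup s_i$ yields exactly $B^{(i)}$, so
\begin{align*}
\frac{1}{z_i} = x_i = \frac{\det(B^{(i)})}{\det(B^{(i-1)})}.
\end{align*}
This requires $B^{(i-1)}$ to be nonsingular. I will establish that inductively: $B^{(0)}=B$ is nonsingular by hypothesis, and the displayed identity shows $\det(B^{(i)}) = \det(B^{(i-1)})/z_i \neq 0$, which gives the step $\det(B^{(i-1)}) = z_i\det(B^{(i)})$. Cramer's rule as stated is for integral matrices, and $\tup s_i\in\lattice(A)\subseteq\zz^d$ with $B^{(i-1)}$ integral, so it applies.

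Telescoping from $i=1$ to $d$ gives $\det(B) = \det(B^{(d)})\prod_{i=1}^d z_i = \det(S)\prod_i z_i$.

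\textbf{Main obstacle.} The delicate step is justifying $x_1=\dots=x_{i-1}=0$, which is the invariant that earlier $\tup s_j$ directions have been eliminated. It also has to be reconciled with the w.l.o.g. row relabeling in line 5, since the order of columns in $B^{(i)}$ must match the order in which the $\tup s_i$ are produced. I would handle the relabeling by fixing, once and for all, the permutation induced by line 5 and applying it to the columns of $B$ before forming the $B^{(i)}$. This multiplies both $\det(B)$ and the final product consistently, so the sign is tracked by $\det(S)$ in the permuted order. For the projection invariant I would appeal to the description of line 10 in \cite{DBLP:conf/stoc/KleinR25}.
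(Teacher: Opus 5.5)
Your proposal is correct and follows the paper's argument essentially step for step. You get $x_i = 1/z_i$ from line 8 and $x_1=\dots=x_{i-1}=0$ from the projections in line 10, hence $B^{(i-1)}\tup x=\tup s_i$; then you apply Cramer's rule to $B^{(i-1)}$ and telescope. Your extra points, namely the inductive nonsingularity of $B^{(i-1)}$ and fixing the line-5 relabeling so that each $\tup s_i$ sits in the column it replaces (which keeps the signs consistent), only make explicit what the paper leaves implicit.
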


Assuming the generated lattice is $\zz^d$, which is the case for example if $C=I_d$ in line 3, we get a direct implication on the absolute value of the determinant of $B$ as in this case $|\det(S)|=1$. Moreover, the determinants of $B$ and $S$ have the same sign since all denominators $z_i$ are positive. 

\begin{corollary}\label{obs:abs_det}
    If the generated lattice is $\lattice(A) = \zz^d$, then the absolute value of the determinant of $B$ as computed in line 1 is
    \begin{align*}
        |\det(B)|  = \prod_{i=1}^d z_i. 
    \end{align*}
\end{corollary}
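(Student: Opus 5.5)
The plan is to derive the corollary directly from \autoref{lem:absolute_value_of_det}, taking absolute values in the identity $\det(B) = \det(S)\prod_{i=1}^d z_i$. First, note that $B$ from line 1 is nonsingular: it consists of linearly independent vectors chosen from $A$. So the hypothesis of \autoref{lem:absolute_value_of_det} holds, and we get
\begin{align*}
    |\det(B)| = |\det(S)| \prod_{i=1}^d |z_i| = |\det(S)| \prod_{i=1}^d z_i .
\end{align*}
Here we use $z_i \in \zz_{\geq 1}$, since each $z_i$ is an lcm of denominators as defined in line 6.

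The remaining step, and the only substantive one, is to show $|\det(S)| = 1$. By \autoref{theo:bit_complexity_generalized_euclid}, the returned $S$ is a basis of $\lattice(A)$, which by assumption equals $\zz^d$. Since the standard basis $I_d$ is also a basis of $\zz^d$, each basis is an integral combination of the other. Hence there are $U, V \in \zz^{d\times d}$ with $S = I_d U$ and $I_d = S V$, so $UV = I_d$ and $\det(U)\det(V) = 1$. Because both determinants are integers, $\det(U) = \det(S) \in \{-1, 1\}$.

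One point needs a little care. $S$ consists of $d$ linearly independent vectors, which the iteration in lines 4--10 produces, one per dimension. This makes $S$ a square matrix whose determinant is defined and nonzero, as required above. Substituting $|\det(S)| = 1$ into the displayed identity gives $|\det(B)| = \prod_{i=1}^d z_i$.

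As a remark, the same argument shows that $\det(B)$ and $\det(S)$ have the same sign, because every $z_i$ is positive. This will be used later when computing the sign. I do not expect any real obstacle: the unimodularity of any basis of $\zz^d$ is standard, and everything else follows from \autoref{lem:absolute_value_of_det}.
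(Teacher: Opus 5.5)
Your proposal is correct and follows the paper's argument: the paper also takes absolute values in $\det(B)=\det(S)\prod_{i=1}^d z_i$ from \autoref{lem:absolute_value_of_det} and uses $|\det(S)|=1$ because $S$ is a basis of $\zz^d$. The only difference is that you write out the standard unimodularity argument ($UV=I_d$ with integral determinants), which the paper leaves implicit.
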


\begin{corollary}\label{obs:same_sign_dets}
    For every $B$ computed in line 1 of~\autoref{alg:fasteuclidean} and returned basis $S$, it holds that $\sgn(\det(B)) = \sgn(\det(S))$.
\end{corollary}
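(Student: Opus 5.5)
The plan is to read \autoref{obs:same_sign_dets} off \autoref{lem:absolute_value_of_det} directly, so there are two cases according to whether $B$ is singular.

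\emph{Nonsingular case.} By \autoref{lem:absolute_value_of_det},
\begin{align*}
    \det(B) = \det(S)\,\prod_{i=1}^d z_i .
\end{align*}
Every $z_i$ is defined in line 6 as the least common multiple of the denominators in row $i$ of $X$, so $z_i\in\zz_{\geq 1}$. Hence $P:=\prod_{i=1}^d z_i$ is a positive integer. Since $\sgn$ is multiplicative and $\sgn(P)=1$, we get
\begin{align*}
    \sgn(\det(B)) = \sgn(\det(S))\cdot\sgn(P) = \sgn(\det(S)).
\end{align*}
The only check needed is that $\det(S)\neq 0$, which means $S$ has full rank. This follows from \autoref{theo:bit_complexity_generalized_euclid}: $S$ is a basis of the full-rank lattice $\lattice(A)$, which contains the $d$ linearly independent columns of $B$. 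It also follows from the displayed identity itself, because $\det(B)\neq 0$ and $P\neq 0$ force $\det(S)\neq 0$.

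\emph{Singular case.} The statement is phrased for $B$ as computed in line 1 of \autoref{alg:fasteuclidean}, and line 1 always chooses $d$ linearly independent columns. So $B$ is nonsingular by construction and this case does not occur. If one wants to cover the determinant setting, where $B$ is supplied directly and might be singular, that case is handled separately by \autoref{alg:singularity} before \autoref{alg:fasteuclidean} is run at all. I would add a short remark saying so.

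The main point needing care is the positivity of the $z_i$: they must be taken as positive lcm's of denominators, not signed quantities. This holds by their definition in line 6, so I expect no real obstacle. The whole proof is a one-line consequence of \autoref{lem:absolute_value_of_det}.
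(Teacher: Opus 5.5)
Your proposal is correct and follows the paper's own argument: the paper likewise gets the corollary directly from $\det(B) = \det(S)\prod_{i=1}^d z_i$ in \autoref{lem:absolute_value_of_det}, together with the fact that every $z_i$ is a positive lcm of denominators. Your extra check that $\det(S)\neq 0$ is harmless but unnecessary, since the identity $\sgn(\det(B)) = \sgn(\det(S))\,\sgn(P)$ holds even when a factor is zero.
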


\subsection{Computing the Sign}\label{sec:sign}

This subsection combines~\autoref{obs:same_sign_dets} with a standard technique for computation of small determinants.
We repeat the arguments in the following lemma for the sake of completeness. 

\begin{lemma}\label{lem:det_mod}
    Given a matrix $M\in \zz^{d\times d}$ with $|\det(M)|\leq \Delta$, the determinant can be computed in time $\tilde{O}(d^{\omega}\log\Delta + d^2\log{\|M\|})$. 
\end{lemma}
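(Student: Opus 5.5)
We plan to use the standard modular approach: compute $\det(M)$ modulo several small primes whose product exceeds $2\Delta$, combine the residues by Chinese remaindering, and read off the unique representative of absolute value at most $\Delta$. Since $|\det(M)|\le\Delta$, this symmetric representative is exactly $\det(M)$, including its sign. This works regardless of whether $M$ is singular.

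\textbf{Choosing the primes.} We take the first $k$ primes $p_1,\dots,p_k$ with $\prod_j p_j > 2\Delta+1$. We may assume $\Delta\ge 2$ by replacing $\Delta$ with $\max\{\Delta,2\}$. By the prime number theorem it suffices to take $k\in O(\log\Delta)$ primes, each of bitlength $O(\log\log\Delta)$, so each prime is a word-sized number up to logarithmic factors. The primes can be generated with a sieve in time $\tilde O(\log\Delta)$.

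\textbf{Reducing the matrix.} Next we reduce all $d^2$ entries of $M$ modulo every $p_j$. Done naively this costs $d^2\cdot k\cdot\log\|M\|$, which is too much. Instead we use simultaneous multi-modular reduction with a remainder tree for each entry, which reduces one integer of bitlength $\log\|M\|$ modulo primes of total bitlength $\ell=O(\log\Delta\log\log\Delta)$ in $\tilde O(\log\|M\|+\log\Delta)$ bit operations. Summed over all entries, this gives $\tilde O(d^2\log\|M\|+d^2\log\Delta)$. The second term is dominated by $d^\omega\log\Delta$.

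We expect this step to be the main obstacle, because it is exactly where the additive term $d^2\log\|M\|$, rather than a product $d^\omega\log\|M\|$, has to be justified. If a single entry has $\log\|M\|\gg\log\Delta$, we first reduce the entry modulo the product $P=\prod_j p_j$ with one division. That division costs $\tilde O(\log\|M\|)$. We then run the remainder tree on a number of size $P$.

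\textbf{Determinants over each field.} For each $j$ we compute $\det(M\bmod p_j)$ over $\mathbb{F}_{p_j}$ using a fast triangular (LUP) factorization~\cite{bunch1974triangular}. This takes $O(d^\omega)$ field operations, each costing $\tilde O(1)$ bit operations since $\log p_j\in O(\log\log\Delta)$. Over all $k$ primes the cost is $\tilde O(d^\omega\log\Delta)$.

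\textbf{Recombination.} Finally we reconstruct $D\equiv\det(M)\pmod P$ with $0\le D<P$ by fast Chinese remaindering in $\tilde O(\log\Delta)$. We return $D$ if $D\le\Delta$ and $D-P$ otherwise. This is correct because $P>2\Delta$ guarantees that at most one integer in $[-\Delta,\Delta]$ lies in each residue class modulo $P$. Adding up the three phases gives the claimed bound $\tilde O(d^\omega\log\Delta+d^2\log\|M\|)$.
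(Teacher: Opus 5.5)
Your proof is correct, but it takes a different route from the paper.

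\textbf{The paper's route.} The paper uses the single modulus $N=2\Delta+1$:
\begin{itemize}
\item each entry of $M$ is reduced once modulo $N$, one division per entry, so the additive $\tilde{O}(d^2\log\|M\|)$ term is immediate;
\item $\det(M)\bmod N$ comes from a fast triangular factorization over the ring $\zz_{2\Delta+1}$ (Hafner--McCurley), costing $O(d^\omega)$ operations on $O(\log\Delta)$-bit numbers;
\item the sign is recovered by the same symmetric-representative argument you use in your recombination step.
\end{itemize}
Since $2\Delta+1$ is in general composite, $\zz_{2\Delta+1}$ is not a field. The paper therefore relies on triangularization over a principal ideal ring, i.e.\ gcd-based elimination, rather than ordinary LUP with invertible pivots.

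\textbf{What your route buys.} Your multi-modular version sidesteps this issue. You only factor over prime fields $\mathbb{F}_{p_j}$, where standard Bunch--Hopcroft-style elimination suffices. Using many small sieved primes also avoids having to find a single prime larger than $2\Delta$ deterministically.

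\textbf{What it costs.} You need extra number-theoretic machinery:
\begin{itemize}
\item the remainder tree, after one reduction modulo $P$, which keeps the entry reduction at $\tilde{O}(d^2\log\|M\|)$ instead of $\tilde{O}(d^2\log\|M\|\log\Delta)$; you correctly identified this as the delicate step;
\item fast Chinese remaindering.
\end{itemize}
By contrast, the paper's argument is only a few lines long. Both approaches give the same bound $\tilde{O}(d^{\omega}\log\Delta + d^2\log\|M\|)$.
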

\begin{proof}
    We compute the determinant in $\zz_{2\Delta +1}$, where the entry-wise modulo on $M$ is computed in $\tilde{O}(d^2\log{\|M\|})$.
    This can be done for example using any fast algorithm for triangular matrix factorization in $\zz_{2\Delta +1}$. 
    There are several decompositions available for this operation to be performed in $\tilde{O}(d^{\omega}\log\Delta)$, 
    see for example~\cite{DBLP:journals/siamcomp/HafnerM91}.

    If the determinant is non-negative, then $\det(M) \in\zz_{2\Delta+1}$ implies the claim. 
    If the determinant is negative, then we have $-\Delta \leq \det(M) < 0$ and thus $\det(M) + (2\Delta+1) \in \zz_{2\Delta+1}$. In other words, 
    given a result of $\det(M)\bmod (2\Delta+1) > \Delta$, we can deduce $\det(M)<0$ with 
    
    \begin{align*}
        \det(M) = (\det(M)\bmod (2\Delta+1)) - (2\Delta +1).
    \end{align*}
    \vspace{-2\baselineskip} \\
    \phantom{.}
\end{proof}

In the next subsection, we want to apply the lemma above to the computed basis $S$. With fixed $C=I_d$ in line 3 of~\autoref{alg:fasteuclidean}, the resulting lattice is $\zz^d$ and thus $S$ is unimodular, i.e. $|\det(S)|=1$. By~\autoref{obs:same_sign_dets} we get a positive sign of $\det(B)$ if $\det(S)=1$ and a negative sign if $\det(S)=-1$. 
\begin{corollary}\label{cor:sign_unomodular_complexity}
    For a unimodular matrix $U\in\zz^{d\times d}$ the determinant can be computed with bit complexity $\tilde{O}(d^{\omega} + d^2\log\|U\|)$.
\end{corollary}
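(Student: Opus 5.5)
This is a direct instance of \autoref{lem:det_mod} with $M=U$ and $\Delta=1$. For a unimodular matrix $U$ we have $|\det(U)|=1$, so the hypothesis $|\det(U)|\leq\Delta$ holds with $\Delta=1$. The lemma then gives a bit complexity of $\tilde{O}(d^{\omega}\log 1 + d^2\log\|U\|)$. This literally reads $\tilde{O}(d^2\log\|U\|)$, which is misleading: the arithmetic cost does not vanish.

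So we make the computation explicit. Work in $\zz_{2\Delta+1}=\zz_3$.

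\textbf{Step 1: reduce modulo $3$.} Reduce every entry of $U$ modulo $3$. A single entry costs $\tilde{O}(\log\|U\|)$ bit operations, so all $d^2$ entries together cost $\tilde{O}(d^2\log\|U\|)$.

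\textbf{Step 2: determinant over $\mathbb{F}_3$.} Compute $\det(U)\bmod 3$ by a fast triangular factorization over the field $\mathbb{F}_3$, for example~\cite{DBLP:journals/siamcomp/HafnerM91}. This uses $O(d^{\omega})$ arithmetic operations. Each operation acts on constant-size numbers and hence costs $O(1)$ bit operations. This step therefore contributes $\tilde{O}(d^{\omega})$.

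\textbf{Step 3: recover the sign.} Since $\det(U)\in\{-1,1\}$, the residue is either $1$ or $2$, and these are distinct modulo $3$. A residue of $1$ gives $\det(U)=1$. A residue of $2$ gives $\det(U)=2-3=-1$, exactly as in the case distinction in the proof of \autoref{lem:det_mod}.

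Adding the two cost contributions gives $\tilde{O}(d^{\omega}+d^2\log\|U\|)$.

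The only real subtlety is the bookkeeping just described: the $d^{\omega}\log\Delta$ term of \autoref{lem:det_mod} must be read as $\tilde{O}(d^{\omega}\cdot\max\{1,\log\Delta\})$, because arithmetic modulo a constant still costs $O(1)$ per operation rather than zero. With that reading, the bound follows immediately.
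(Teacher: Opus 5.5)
Your proposal is correct and matches the paper's argument: the corollary is simply \autoref{lem:det_mod} with $\Delta=1$, i.e.\ computing $\det(U)$ in $\zz_3$ and reading off $\pm 1$. Your remark that $d^{\omega}\log\Delta$ must be read as $d^{\omega}\max\{1,\log\Delta\}$ is exactly the bookkeeping the paper leaves implicit when it writes $d^{\omega}$ in place of $d^{\omega}\log 1$.
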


\subsection{Analysis}\label{sec:analysis}

\autoref{alg:determinant} implements the ideas of the previous subsections yielding our main result as stated in the following theorem.

\begin{algorithm}
  \caption{Determinant Computation via the Generalized Euclidean Algorithm
    \label{alg:determinant}}
  \begin{algorithmic}[1]
    \Statex{
    \textsc{Input: } $B=\left(B_1,\ldots, B_d \right) \in \zz^{\dd\times \dd}$ }
    \State{let $C = I_d$ be the identity matrix}
    \State{test singularity of $B$ using \autoref{alg:singularity}, if true is returned, return $0$}
    \State{continue with $X=B^{-1}$ as computed in \autoref{alg:singularity}}
    \For{$i=1$ to $\dd$}
        \State{find a row of $X$ maximizing the lcm of denominators and w.l.o.g. assume this is row $i$}
        \State{represent row $i$ of $X$ as $x_{i\tup c} = \frac{y_{\tup c}}{z_i}$, where $z_i$ is the lcm of denominators in row $i$}
        \State{consider the translate numbers $y_{\tup c}$ for every $\tup c\in C$ and $z_i$ for $\tup b_i$}
        \State{find a vector $\tup s_i \in \lattice(A)$, which lies in the subspace $\frac{1}{z_i}\tup b_i + \mathrm{span}(B\setminus \tup b_i)$ using the}
        \Statex{\phantom{\textbf{for}} extended Euclidean algorithm (in dimension one)}
        \State{add $\tup s_i$ to the solution basis $S$}
        \State{project the vectors in $C$ to the subspace $\mathrm{span}(B\setminus \tup b_i)$ and update $X$ accordingly}
    \EndFor
    \State{compute $\det(S)$ using \autoref{cor:sign_unomodular_complexity}}
    \State \Return{$\det(S)\cdot \prod_{i=1}^dz_i$}
  \end{algorithmic}
\end{algorithm}

\begin{theorem}
    \autoref{alg:determinant} computes the determinant of a matrix $B\in\zz^{d\times d}$ in bit complexity 
    \begin{align*}
        \tilde{O}(d^{\omega(2)}\log\|B\|).
    \end{align*}
\end{theorem}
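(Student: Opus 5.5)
The plan is to prove correctness and the complexity bound separately, going through Algorithm~\ref{alg:determinant} line by line and using only results already established.

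\emph{Correctness.} By Lemma~\ref{lem:singularity}, line~2 returns $0$ exactly when $B$ is singular, so from now on we may assume $B$ is nonsingular. In that case the $X$ obtained in line~3 equals $B^{-1} = B^{-1}C$ with $C = I_d$. This is precisely the matrix that line~3 of Algorithm~\ref{alg:fasteuclidean} would compute on input $A = B\cup I_d$ with starting basis $B$. Lines 4--10 are therefore a literal run of the generalized Euclidean algorithm. By Theorem~\ref{theo:bit_complexity_generalized_euclid}, the returned $S$ is a basis of $\lattice(B\cup I_d)=\zz^d$, so $|\det(S)|=1$. Lemma~\ref{lem:absolute_value_of_det} then gives $\det(B)=\det(S)\prod_i z_i$, and the returned value is exactly this product. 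The two cases together give correctness.

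\emph{Complexity.} There are four parts to bound.
\begin{enumerate}
\item Line~2 costs $\tilde O(d^{\omega(2)}\log\|B\|)$ by Lemma~\ref{lem:singularity}.
\item Lines 4--10 cost $\tilde O(\max\{n-d,d\}\,d^{\omega(2)-1}\log\|A\|)$ by Theorem~\ref{theo:bit_complexity_generalized_euclid}, where $n=2d$ and $\|A\|=\max\{\|B\|,1\|\}$. This is $\tilde O(d^{\omega(2)}\log\|B\|)$. Two points need checking here.
\begin{enumerate}
\item Degenerate entries, such as $\|B\|=1$, should be absorbed by the usual convention that $\log\|B\|\ge 1$.
\item Reusing $X$ from Algorithm~\ref{alg:singularity} in place of solving $BX=C$ is at least as cheap, since it is the same computation. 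The initial reduction of $X$ modulo $1$ that Klein--Reuter perform costs no more than their own setup.
\end{enumerate}
\item Line~11 costs $\tilde O(d^{\omega}+d^2\log\|S\|)$ by Corollary~\ref{cor:sign_unomodular_complexity}. It therefore suffices to bound $\log\|S\|$ by $\tilde O(d^{\omega(2)-2}\log\|B\|)$; any bound of the form $\log\|S\|\in\tilde O(d\log\|B\|)$ is enough. The entries of $S$ are bounded because the analysis in~\cite{DBLP:conf/stoc/KleinR25} keeps every intermediate vector, including each $\tup s_i$, of bit size polynomially bounded by Hadamard's bound. Their running-time bound also implies that writing down $S$ fits within the total budget. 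As a fallback we can simply observe that the output $S$ has at most as many bits as the running time of lines 4--10, which yields $d^2\log\|S\|\le\tilde O(d^{\omega(2)}\log\|B\|)$.
\item Line~12 computes the product $\prod_i z_i=|\det B|$, which has $O(d\log(d\|B\|))$ bits. By a product tree this costs $\tilde O(d\log\|B\|)$ bit operations.
\end{enumerate}
Multiplying by the sign $\pm1$ is free, so all parts together give $\tilde O(d^{\omega(2)}\log\|B\|)$.

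The main obstacle is the interface in point~2. We must make sure that starting the generalized Euclidean loop from the $X$ handed over by Algorithm~\ref{alg:singularity}, with $C=I_d$ fixed rather than chosen in line~1, still satisfies the hypotheses and invariants, such as $\|X\|\le 1$ after reduction, under which Theorem~\ref{theo:bit_complexity_generalized_euclid} is proved. We also need the size bound on $S$ from point~3. For both we would lean on the fact that the theorem applies to any full-rank starting basis chosen among the columns. The fallback for $\|S\|$ is the output-size argument above.
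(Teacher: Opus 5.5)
Your proposal is correct and follows the paper's proof essentially line by line. Correctness comes from the singularity test together with $\det(B)=\det(S)\prod_i z_i$ and unimodularity of $S$, since $\lattice(B\cup I_d)=\zz^d$. The running time is obtained by charging line~2, lines~4--10, line~11 and line~12 to \autoref{lem:singularity}, \autoref{theo:bit_complexity_generalized_euclid}, \autoref{cor:sign_unomodular_complexity} and a cheap product, respectively. Your explicit attention to $\log\|S\|$ addresses a point the paper passes over silently.

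One caution: your fallback for $\|S\|$ is not valid as stated. The output size of lines~4--10 bounds the total bitlength $\sum_{j,k}\log|s_{jk}|$, not $d^2\log\|S\|$. To make the fallback work, note that the $d^2\log\|U\|$ term in \autoref{cor:sign_unomodular_complexity} only pays for reducing the entries of $S$ modulo $3$. That reduction costs time linear in the total bit size of $S$, which the output-size argument does bound.
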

\begin{proof}
    Let us first consider correctness.
    If $B$ is singular, then by~\autoref{lem:singularity} line 2 detects and returns this fact in target time. 
    If otherwise $B$ is nonsingular, then \autoref{alg:determinant} correctly computes the determinant of $B$ by \autoref{obs:abs_det} and \autoref{obs:same_sign_dets} because $\lattice(B\cup C) = \lattice (B\cup I_d) = \zz^d$ and thus the computed basis $S$ is  unimodular.

    Regarding bit complexity, testing singularity and computing the sign of $\det(S)$ are in target time by~\autoref{lem:singularity} and~\autoref{cor:sign_unomodular_complexity}. The same is true for the multiplication in line 12. 
    Apart from these operations, \autoref{alg:determinant} only performs (most parts of) the generalized Euclidean algorithm,\autoref{alg:fasteuclidean}, on input $A\in\zz^{d\times 2d}$ consisting of the columns of $B$ and $I_d$, which is in target time by~\autoref{theo:bit_complexity_generalized_euclid}. In total, the bit complexity is bounded by $\tilde{O}(d^{\omega(2)}\log\|B\|)$.
\end{proof}

\printbibliography

\end{document}

%% file: header.tex
\usepackage{csquotes}
\usepackage{microtype}
\usepackage{mathtools}
\usepackage{amssymb}
\usepackage{amsmath}
\usepackage{amsthm}
\usepackage{thmtools}
\usepackage{thm-restate}
\usepackage{hyperref}
\usepackage{graphicx}
\usepackage{enumerate}  
\usepackage{multirow}
\usepackage{braket} 
\usepackage{tcolorbox}%

\usepackage{tikz}
\usepackage{circuitikz}
\usepackage{enumitem}
\usepackage{svg}

\usepackage{algorithmicx}%
\usepackage[noend]{algpseudocode}%
\usepackage{algorithm}

\newtheorem{theorem}{Theorem}

\newtheorem{lemma}{Lemma}
\newtheorem{fact}{Fact}
\newtheorem{corollary}{Corollary}

\newtheorem{problem}{Problem}

\theoremstyle{remark}

\newcommand{\zz}{\mathbb{Z}}

\newcommand{\dd}{d}

\newcommand{\tup}[1]{\mathbf{#1}}

\newcommand{\norm}[1]{\left\lVert #1 \right\rVert}

\DeclareMathOperator{\lattice}{\mathcal{L}}

\DeclareMathOperator{\vol}{vol}
\DeclareMathOperator{\sgn}{sgn}

\newcommand{\OTilde}{\ensuremath{\Tilde{\mathcal{O}}}}